\begin{document}

\title{Capacity of Strong and Very Strong Gaussian Interference Relay-without-delay Channels}
\author{
\authorblockN{Hyunseok Chang and Sae-Young Chung}\\
\authorblockA{Department of EE, KAIST, Daejeon, Korea \\
Email: hyunseok.chang@kaist.ac.kr, sychung@ee.kaist.ac.kr }
}
\maketitle


\newtheorem{definition}{Definition}
\newtheorem{theorem}{Theorem}
\newtheorem{lemma}{Lemma}
\newtheorem{example}{Example}
\newtheorem{corollary}{Corollary}
\newtheorem{proposition}{Proposition}
\newtheorem{conjecture}{Conjecture}
\newtheorem{remark}{Remark}

\def \diag{\operatornamewithlimits{diag}}
\def \min{\operatornamewithlimits{min}}
\def \max{\operatornamewithlimits{max}}
\def \log{\operatorname{log}}
\def \max{\operatorname{max}}
\def \rank{\operatorname{rank}}
\def \out{\operatorname{out}}
\def \exp{\operatorname{exp}}
\def \arg{\operatorname{arg}}
\def \E{\operatorname{E}}
\def \tr{\operatorname{tr}}
\def \SNR{\operatorname{SNR}}
\def \SINR{\operatorname{SINR}}
\def \dB{\operatorname{dB}}
\def \ln{\operatorname{ln}}
\def \th{\operatorname{th}}


\begin{abstract}

        In this paper, we study the interference relay-without-delay channel which is an interference channel with a relay helping the communication. We assume the relay's transmit symbol depends not only on its past received symbols but also on its current received symbol, which is an appropriate model for studying amplify-and-forward type relaying when the overall delay spread is much smaller than the inverse of the bandwidth.
        For the discrete memoryless interference relay-without-delay channel, we show an outer bound using genie-aided outer bounding.
        For the Gaussian interference relay-without-delay channel, we define strong and very strong interference relay-without-delay channels and propose an achievable scheme based on instantaneous amplify-and-forward (AF) relaying.
        We also propose two outer bounds for the strong and very strong cases. Using the proposed achievable scheme and outer bounds, we show that our scheme can achieve the capacity exactly when the relay's transmit power is greater than a certain threshold. This is surprising since the conventional AF relaying is usually only asymptotically optimal, not exactly optimal.
        The proposed scheme can be useful in many practical scenarios due to its optimality as well as its simplicity.

\end{abstract}

\begin{keywords}
Interference channel, interference relay channel, interference relay-without-delay channel, and amplify-and-forward relying
\end{keywords}

\section{Introduction}

 The performance of wireless communication systems is significantly affected by the interference since resources such as time, frequency, and space are often shared.
 The two-user interference channel, which is the smallest multiple source-destination pair communication channel with interference, has received a great deal of attention.
 However, single-letter capacity expressions for discrete memoryless and Gaussian interference channels are still unknown in general and known only for some limited cases.
 For strong \cite{Sato81,Costa87} and very strong \cite{Carleial78} interference channels, the capacity region is known in general.
 Recently, the sum-rate capacity for the very weak IC \cite{Shang09} has been discovered based on genie-aided outer bounding.
 Motivated by the sum-rate capacity for the very weak IC, the capacity region for general Gaussian interference channel is characterized to within one bit in~\cite{Etkin08}.

 Recently, many people have focused on the interference relay channel, which is an interference channel with a relay helping the communication.
 For the interference relay channel where the relay's transmit signal can be heard from only one source, the capacity is known under a certain condition using an interference forwarding scheme \cite{Maric08}.
 Rate splitting and decode-and-forward relaying approach is considered in \cite{Sahin07}, showing that transmitting the common message only at the relay achieves the maximum achievable sum rate for symmetric Gaussian interference relay channels.
 For a Gaussian interference channel with a cognitive relay having access to the messages transmitted by both sources, \cite{Sridharan08} proposed a new achievable rate region.

 In network information theory, it is typically assumed that the relay's transmit symbol depends only on its past received symbols. However, sometimes it makes more sense to assume that the relay's current transmit symbol depends also on its current received symbol, which is a better model for studying AF type relaying if the overall delay spread including the path through the relay is much smaller than the inverse of the bandwidth.
 This channel has recently received attention~\cite{Elgamal07,Nam11}.
 Relay-without-delay channel is a single source-destination pair communication system with a relay helping the communication such that it encodes its transmit symbol not only based on its past received symbols but also on the current received symbol.
 For the Gaussian relay-without-delay channel, instantaneous amplify-and-forward relaying achieves the capacity if the relay's transmit power is greater than a certain threshold~\cite{Elgamal07}.

 In this paper, we consider interference relay-without-delay channel where two source nodes want to transmit messages to their respective destination nodes and a relay without delay helps their communication.
 We study both discrete memoryless and Gaussian memoryless interference relay-without-delay channels.
 We present an outer bound using genie-aided bounding for the discrete memoryless interference relay-without-delay.
 For the Gaussian interference relay-without-delay channel, we define strong and very strong Gaussian interference relay channels motivated by the Gaussian strong and very interference channels.
 For strong and very strong Gaussian interference relay-without-delay channels, we show new outer bounds using genie-aided outer bounding such that both receivers know the received sequence at the relay.
 We also propose an achievable scheme using Gaussian codebooks, simultaneous non-unique decoding, and instantaneous amplify-and-forward relaying.
 The complexity of the proposed relaying is very low compared to other more complicated relaying schemes including decode-and-forward and compress-and-forward since it only has symbolwise operations. Despite its simplicity, we show that it can achieve the capacity exactly when the relay's transmit power is greater than a certain threshold for the very strong case. For the strong case, the same is true if some additional conditions are satisfied.
 This is surprising since it means that such a simple symbolwise relaying can be optimal in a complicated communication scenario where there are two source-destination pairs interfering with each other.
 The proposed scheme would be practically useful since it can be optimal and at the same time it is very simple.

This paper is organized as follows: In Section 2, we describe the discrete memoryless interference relay-without-delay channel model and derive an outer bound.
In Section 3, we define strong and very strong Gaussian interference relay-without-delay channel and propose outer bounds for each case.
We propose an achievable scheme based on instantaneous amplify-and-forward relaying and show that this achievable scheme achieves the capacity for strong and very strong Gaussian interference relay-without-delay channels under certain conditions.
To show the existence of strong and very strong Gaussian interference relay-without-delay channels, we provide some examples.

\section{Discrete memoryless Interference Relay-without-delay Channel}
The discrete memoryless interference relay-without-delay channel consists of two senders $X_1\in \mathcal{X}_1,X_2\in \mathcal{X}_2$, two receivers $Y_1\in \mathcal{Y}_1,Y_2\in \mathcal{Y}_2$, and a set of conditional probability mass functions
$p(y_R|x_1,x_2)p\left(y_1,y_2|x_1,x_2,x_R,y_R\right)$ on $\mathcal{Y}_1 \times \mathcal{Y}_2 \times \mathcal{Y}_R$, one for each $(x_1,x_2,x_R,y_R) \in \mathcal{X}_1 \times \mathcal{X}_2 \times \mathcal{X}_R \times \mathcal{Y}_R$. It is denoted by $(\mathcal{X}_1 \times \mathcal{X}_2 \times \mathcal{X}_R, p(y_R|x_1,x_2)p(y_1,y_2|x_1,x_2,x_R,y_R),\mathcal{Y}_1 \times \mathcal{Y}_2 \times \mathcal{Y}_{R})$.
Then, the interference relay channel can be depicted as in Figure \ref{Figure 3.1}.

\begin{figure}[tch]
    \centerline{\includegraphics[width=15cm]{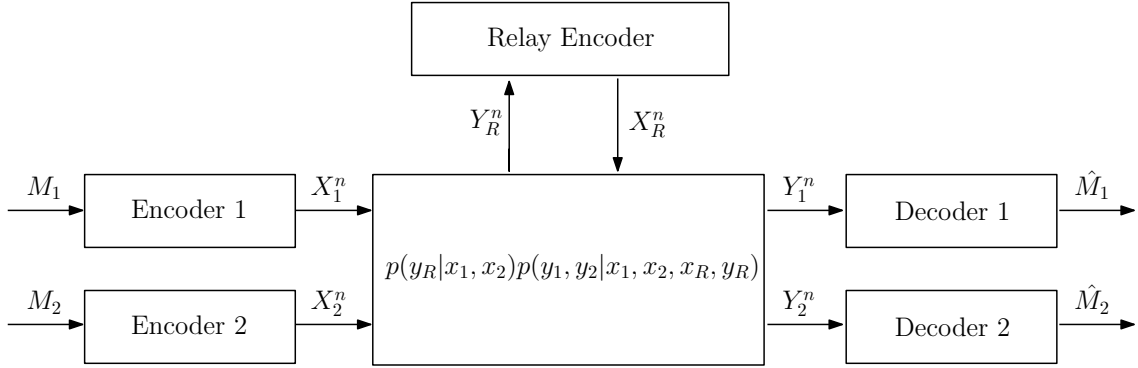}}
    \caption{ Discrete Memoryless Interference Relay-without-delay Channel.
    } \label{Figure 3.1}
\end{figure}

A $(2^{nR_1},2^{nR_2},n)$ code for the discrete memoryless interference relay-without-delay channel consists of two message sets $[1:2^{nR_1}]\triangleq \{1,2,\ldots,2^{nR_1}\}$ and $[1:2^{nR_2}]$, two encoders that assign codewords $x_1^n(m_1)$ and $x_2^n(m_2)$ to each messages $m_1 \in [1:2^{nR_1}]$ and $m_2 \in [1:2^{nR_2}]$, respectively.
The encoding function of a relay $f^n$ is defined as $x_{R,i}=f_i(y_R^{i})$, for $1 \leq i \leq n$.
For decoding, decoder i assigns a message $\hat{m}_i$ or an error $e$ to each received sequence $y_i^n \in \mathcal{Y}_i^n$, for $i \in \{1,2\}$.

The average probability of error is defined as $P_e^{(n)}=P\left((\hat{M}_1,\hat{M}_2) \neq (M_1,M_2)\right)$.
A rate pair $(R_1,R_2)$ is said to be achievable if there exists a sequence of $(2^{nR_1},2^{nR_2},n)$ codes such that $\lim_{n\rightarrow\infty}P_e^{(n)}=0$.
The capacity region of the discrete memoryless interference relay-without-delay channel is defined as the closure of the set of achievable rate pairs $(R_1,R_2)$.

For the discrete memoryless interference relay-without-delay channel, we get the following outer bound on the capacity region.
\begin{theorem}
For the discrete memoryless interference relay-without-delay channel, the capacity region is contained in the set of rate pairs $(R_1,R_2)$ such that
\begin{align*}
R_1 &\leq I(X_{1};Y_{R}|X_{2},Q)+I(X_{1};Y_{1}|X_{2},Y_{R},X_{R},Q),\\
R_2 &\leq I(X_{2};Y_{R}|X_{1},Q)+I(X_{2};Y_{2}|X_{1},Y_{R},X_{R},Q),
\end{align*}
for some $p(q)p(x_1|q)p(x_2|q)p(y_{R}|x_1,x_2)p(x_{R}|y_{R},q)p(y_1,y_2|x_1,x_2,x_R,y_R)$.
\end{theorem}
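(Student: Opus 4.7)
The plan is to combine Fano's inequality with a genie-aided outer bound that reveals to receiver $1$ both the other user's message $M_{2}$ and the entire sequence $Y_{R}^{n}$ received at the relay (and symmetrically for receiver $2$). Starting from Fano's inequality $n(R_{1}-\epsilon_{n}) \leq I(M_{1};Y_{1}^{n})$ and using the independence of $M_{1}$ and $M_{2}$, I would first write
\[
n(R_{1}-\epsilon_{n}) \leq I(M_{1};Y_{1}^{n},Y_{R}^{n}\mid M_{2}) = I(M_{1};Y_{R}^{n}\mid M_{2}) + I(M_{1};Y_{1}^{n}\mid Y_{R}^{n},M_{2}),
\]
which splits the bound into a ``relay-reception'' piece and a ``residual'' piece. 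The symmetric manipulation handles the $R_{2}$ inequality.

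Next I would single-letterize each piece. For the first term, after the chain rule in $i$, I would bound $H(Y_{R,i}\mid M_{2},Y_{R}^{i-1})$ by $H(Y_{R,i}\mid X_{2,i})$ (since $X_{2,i}$ is a function of $M_{2}$ and conditioning reduces entropy), and use the memoryless property $p(y_{R}\mid x_{1},x_{2})$ together with the fact that $X_{1,i},X_{2,i}$ are functions of $M_{1},M_{2}$ to get $H(Y_{R,i}\mid M_{1},M_{2},Y_{R}^{i-1}) = H(Y_{R,i}\mid X_{1,i},X_{2,i})$. This yields $\sum_{i} I(X_{1,i};Y_{R,i}\mid X_{2,i})$. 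For the second term, the crucial observation is that the relay is causal in $Y_{R}^{n}$, so $X_{R}^{n}$ is a deterministic function of $Y_{R}^{n}$; hence conditioning on $Y_{R}^{n}$ also conditions on $X_{R}^{n}$. Applying the chain rule in $i$ and using memorylessness of the channel $p(y_{1},y_{2}\mid x_{1},x_{2},x_{R},y_{R})$ gives $\sum_{i} I(X_{1,i};Y_{1,i}\mid X_{2,i},Y_{R,i},X_{R,i})$.

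Finally, I would introduce a time-sharing random variable $Q$ uniform on $\{1,\ldots,n\}$, independent of everything else, and set $X_{1}=X_{1,Q}$, $X_{2}=X_{2,Q}$, $X_{R}=X_{R,Q}$, $Y_{R}=Y_{R,Q}$, $Y_{1}=Y_{1,Q}$, $Y_{2}=Y_{2,Q}$. The two inequalities then become the desired single-letter bounds, and letting $n\to\infty$ removes $\epsilon_{n}$.

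The main obstacle, in my view, is not the mutual information chasing but verifying that the induced joint distribution factors exactly as $p(q)p(x_{1}\mid q)p(x_{2}\mid q)p(y_{R}\mid x_{1},x_{2})p(x_{R}\mid y_{R},q)p(y_{1},y_{2}\mid x_{1},x_{2},x_{R},y_{R})$. The factors $p(x_{1}\mid q)p(x_{2}\mid q)$ follow from independence of the messages; the channel factors are inherited from the DMC definition; the subtle factor is $p(x_{R}\mid y_{R},q)$. Here one must carefully argue that because $X_{R,i}$ is a (possibly stochastic in general, deterministic in this model) function of $Y_{R}^{i}$, the conditional distribution of $X_{R,i}$ given $Y_{R,i}$ after averaging over the past $Y_{R}^{i-1}$ depends on $i$ only, so indexing by $Q$ yields precisely $p(x_{R}\mid y_{R},q)$ with no dependence on $x_{1}$ or $x_{2}$ beyond what $y_{R}$ already provides. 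Handling this Markov structure cleanly is the delicate point; the rest of the argument is a routine application of the standard genie-aided single-letterization recipe.
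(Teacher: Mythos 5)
Your proposal follows essentially the same route as the paper's proof: Fano's inequality after revealing the genie side information $(M_2,Y_R^n)$ to receiver $1$, the relay's strict causality in $Y_R^n$ so that $X_R^n$ is determined by $Y_R^n$, the memoryless channel laws for single-letterization, and a uniform time-sharing variable $Q$. The only cosmetic difference is the chain-rule order: you split $I(M_1;Y_1^n,Y_R^n\mid M_2)=I(M_1;Y_R^n\mid M_2)+I(M_1;Y_1^n\mid Y_R^n,M_2)$ first and single-letterize each piece, whereas the paper chains over the time index $i$ keeping $(Y_{R,i},Y_{1,i})$ together and then splits; both reach $\sum_i I(X_{1,i};Y_{R,i}\mid X_{2,i})+I(X_{1,i};Y_{1,i}\mid X_{2,i},Y_{R,i},X_{R,i})$.

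One cautionary note on your closing paragraph: the claim that after averaging over $Y_R^{i-1}$ the conditional of $X_{R,i}$ given $Y_{R,i}$ ``has no dependence on $x_1$ or $x_2$ beyond what $y_R$ already provides'' is generically false. Since $Y_R^{i-1}$ depends on $X_1^{i-1},X_2^{i-1}$ and these are correlated with $(X_{1,i},X_{2,i})$ through the messages, the induced conditional $p(x_{R,i}\mid y_{R,i},x_{1,i},x_{2,i})$ does, in general, depend on $(x_{1,i},x_{2,i})$. The paper's proof does not address this factorization subtlety either; what the chain of inequalities actually establishes is the single-letter mutual-information bound evaluated under the distribution induced by the code and $Q$, which is the content used in the Gaussian applications. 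So this does not undermine the argument, but it is worth being aware that the resolution you gestured at does not go through as stated.
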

\proof
The joint probability mass function (pmf) induced by the encoders, relay, and decoders can be written in the following form:
\begin{align*}
&p(m_1,m_2,x_1^n,x_2^n,x_{R}^n,y_1^n,y_2^n,y_{R}^n)\\
&=2^{-n(R_1+R_2)}p(x_1^n|m_1)p(x_2^n|m_2)\left(\prod_{i=1}^{n}p_{Y_{R}|X_1,X_2}(y_{Ri}|x_{1i}x_{2i})\right)\left(\prod_{i=1}^{n}p_{X_{R,i}|Y_{R}^i}(x_{R,i}|y_{R}^{i})\right)\\
&\left(\prod_{i=1}^{n}p_{Y_{1},Y_{2}|X_1,X_2,X_{R},Y_{R}}(y_{1i},y_{2i}|x_{1i},x_{2i},x_{Ri},y_{Ri})\right)
\end{align*}
From Fano's inequality \cite{text_Cover}, we get
\begin{align*}
&H(M_1|Y_1^n,Y_{R}^n,M_2)\leq H(M_1|Y_1^n)\leq H(M_1|\hat{M_1})\leq n\epsilon_n,\\
&H(M_2|Y_2^n,Y_{R}^n,M_1)\leq H(M_2|Y_2^n)\leq H(M_2|\hat{M_2})\leq n\epsilon_n.
\end{align*}
where $\epsilon_n$ tends to zero as $n \rightarrow \infty$. An upper bound on $nR_1$ can be expressed as follows:
\begin{align*}
&nR_1=H(M_1)=H(M_1|M_2)\\
&\overset{(a)}\leq I(M_1;Y_1^n,Y_{R}^n|M_2)+n\epsilon_n\\
&=\sum_{i=1}^nI(M_1;Y_{Ri},Y_{1,i}|M_2,Y_{R}^{i-1},Y_{1}^{i-1})+n\epsilon_n\\
&\overset{(b)}= \sum_{i=1}^nI(M_1,X_{1i};Y_{Ri},Y_{1,i}|M_2,X_{2i},Y_{R}^{i-1},Y_{1}^{i-1})+n\epsilon_n\\
&\leq \sum_{i=1}^nI(M_1,M_2,X_{1i},Y_{1}^{i-1};Y_{Ri},Y_{1,i}|X_{2i},Y_{R}^{i-1})+n\epsilon_n\\
&= \sum_{i=1}^nI(M_1,M_2,X_{1i},Y_{1}^{i-1};Y_{Ri}|X_{2i},Y_{R}^{i-1})+\sum_{i=1}^nI(M_1,M_2,X_{1i},Y_{1}^{i-1};Y_{1,i}|X_{2i},Y_{R}^{i})+n\epsilon_n\\
&\overset{(c)}\leq \sum_{i=1}^nI(M_1,M_2,X_{1i},Y_{1}^{i-1},Y_{R}^{i-1};Y_{Ri}|X_{2i})+\sum_{i=1}^nI(M_1,M_2,X_{1i},Y_{1}^{i-1},Y_{R}^{i-1};Y_{1,i}|X_{2i},Y_{Ri},X_{Ri})+n\epsilon_n\\
&\overset{(d)}=\sum_{i=1}^nI(X_{1i};Y_{Ri}|X_{2i})+\sum_{i=1}^nI(X_{1i};Y_{1,i}|X_{2i},Y_{Ri},X_{Ri})+n\epsilon_n\\
&=nI(X_{1Q};Y_{R,Q}|X_{2Q},Q)+nI(X_{1Q};Y_{1Q}|X_{2Q},Y_{RQ},X_{RQ},Q)+n\epsilon_n\\
&=nI(X_{1};Y_{R}|X_{2},Q)+nI(X_{1};Y_{1}|X_{2},Y_{R},X_{R},Q)+n\epsilon_n
\end{align*}
where (a) follows from Fano's inequality, (b) and (c) follow since $X_{ji}$ is a function of $M_j$ for $j \in \{1,2\}$, respectively and $X_{Ri}=f_i(Y_{R}^i)$, and (d) holds since $(M_1,M_2,Y_{R}^{i-1},Y_1^{i-1}) \rightarrow (X_{1i},X_{2i}) \rightarrow Y_{Ri}$ and $(M_1,M_2,Y_{1}^{i-1},Y_{R}^{i-1}) \rightarrow (X_{1i},X_{2i},X_{Ri},Y_{Ri}) \rightarrow Y_{1i}$. In the above, $Q$ is a time sharing random variable such that $Q \sim \operatorname{Unif}[1:n]$ which is independent of $(X_1^n,X_2^n,Y_1^n,Y_2^n,Y_R^n,X_R^n)$ and we define $X_1=X_{1Q}$, $X_2=X_{2Q}$, $Y_1=Y_{1Q}$, $Y_2=Y_{2Q}$, $Y_R=Y_{RQ}$, $X_R=X_{RQ}$.

Similarly, an upper bound on $R_2$ can be expressed as follows:
\begin{align*}
R_2 \leq I(X_{2};Y_{R}|X_{1},Q)+I(X_{2};Y_{2}|X_{1},Y_{R},X_{R},Q)
\end{align*}
\endproof

\section{Gaussian Interference Relay-Without-Delay Channel}

We consider the Gaussian interference relay-without-delay channel model depicted in Figure \ref{Figure 3.2}, which is a Gaussian version of the discrete memoryless interference relay-without-delay channel considered in Section 2.
Suppose the relay node is equipped with 3 antennas, one for receiving and two for transmitting signals. Receiving and transmitting antennas are assumed to be isolated so that they do not interfere with each other.
\begin{figure}[tch]
    \centerline{\includegraphics[width=12cm]{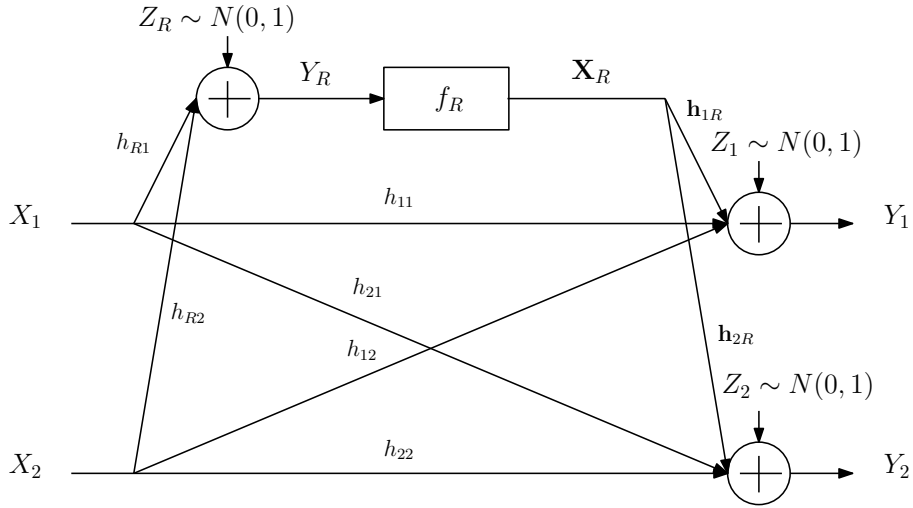}}
    \caption{ Gaussian Interference Relay-Without-Delay Channel.
    } \label{Figure 3.2}
\end{figure}
Then the relay's received signal $Y_{R}$ can be expressed as follows:
\begin{align*}
Y_{R}=h_{R1}X_1+h_{R2}X_2+Z_{R}
\end{align*}
where $h_{Rj}$ is the channel gain from source $j$ to the relay, $Z_{R}\sim N(0,1)$ is the zero-mean unit-variance Gaussian noise at the relay. Suppose an average power constraint $P$ on source nodes and $P_R$ on the relay node.
The received signal of destination $i$, $Y_i$ can be expressed as follows:
\begin{align*}
Y_{1}=h_{11}X_1+h_{12}X_2+\textbf{h}_{1R}\textbf{X}_{R}+Z_1\\
Y_{2}=h_{21}X_1+h_{22}X_2+\textbf{h}_{2R}\textbf{X}_{R}+Z_2
\end{align*}
where $\textbf{h}_{iR}=[h_{iR}^{[1]},h_{iR}^{[2]}]$, $h_{iR}^{[k]}$ is the channel gain from the $k$-th antenna from the relay to destination $i$, $h_{ik}$ is the channel gain from source $k$ to destination $i$, for $i,k \in\{1,2\}$, and $Z_{i}\sim N(0,1)$ is the Gaussian noise at destination $i$.

\subsection{Very strong Gaussian interference relay-without-delay channel}
\begin{definition}
A Gaussian interference relay-without-delay channel is said to be \textbf{very strong Gaussian interference relay-without-delay channel} if
\begin{align*}
\left(h_{11}^2+h_{R1}^2\right)P&\leq\frac{\left(h_{21}h_{22}+h_{R1}h_{R2}\right)^2P}{\left(h_{22}^2+h_{R2}^2\right)^2P+h_{22}^2+h_{R2}^2}\\
\left(h_{22}^2+h_{R2}^2\right)P&\leq\frac{\left(h_{11}h_{12}+h_{R1}h_{R2}\right)^2P}{\left(h_{11}^2+h_{R1}^2\right)^2P+h_{11}^2+h_{R1}^2}
\end{align*}
\end{definition}
\vspace{0.15in}
\begin{theorem}
The capacity region of the very strong Gaussian interference relay-without-delay channel for
\begin{align}
\frac{(h_{2R}^{[1]}h_{22}h_{R1}-h_{1R}^{[1]}h_{11}h_{R2})^2+(h_{2R}^{[2]}h_{22}h_{R1}-h_{1R}^{[2]}h_{11}h_{R2})^2}{(h_{1R}^{[2]}h_{2R}^{[1]}-h_{1R}^{[1]}h_{2R}^{[2]})^2 h_{11}^2 h_{22}^2 } \leq \frac{P_R}{(h_{R1}^2+h_{R2}^2)P+1}\label{eq:pr}
\end{align}
is the set of rate pairs $(R_1,R_2)$ such that
\begin{align*}
R_1 \leq \frac{1}{2}\log\left(1+\left(h_{11}^2+h_{R1}^2\right)P\right)\\
R_2 \leq \frac{1}{2}\log\left(1+\left(h_{22}^2+h_{R2}^2\right)P\right)
\end{align*}
\end{theorem}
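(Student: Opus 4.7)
The plan is to obtain matching outer and inner bounds on the rate region. The outer bound will specialize Theorem 1 to the Gaussian channel, while the inner bound will use Gaussian codebooks of power $P$ at each source together with two-dimensional instantaneous amplify-and-forward (AF) at the relay and simultaneous non-unique decoding at each destination, exploiting the very strong condition in Definition 1 to decode the cross message first.

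For the converse, I would start from the per-$Q$ bound
\[
R_1 \leq I(X_1;Y_R\mid X_2,Q) + I(X_1;Y_1\mid X_2,Y_R,X_R,Q)
\]
of Theorem 1. Fix $Q=q$ and let $\sigma^2=\mathrm{Var}(X_1\mid Q=q)$. Since $Y_R-h_{R2}X_2=h_{R1}X_1+Z_R$ and since $X_R$ enters $Y_1$ only through the known additive offset $\mathbf{h}_{1R}\mathbf{X}_R$, the second mutual information equals $h(h_{11}X_1+Z_1\mid h_{R1}X_1+Z_R,Q=q)-\tfrac12\log(2\pi e)$. Combining the Gaussian maximum-entropy bound on $I(X_1;Y_R\mid X_2,Q=q)$ with the linear-MMSE upper bound on this conditional differential entropy gives
\[
I(X_1;Y_R\mid X_2,Q=q)+I(X_1;Y_1\mid X_2,Y_R,X_R,Q=q)\leq \tfrac12\log\!\bigl(1+(h_{11}^2+h_{R1}^2)\sigma^2\bigr),
\]
and averaging over $Q$ by concavity of $\log$ together with $\mathbb{E}\sigma^2\leq P$ yields $R_1\leq\tfrac12\log(1+(h_{11}^2+h_{R1}^2)P)$, with a symmetric derivation for $R_2$.

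For achievability, the relay picks its AF vector $\mathbf{a}\in\mathbb{R}^2$ as the unique solution of the $2{\times}2$ linear system
\[
\mathbf{h}_{1R}\mathbf{a}=h_{R1}/h_{11},\qquad \mathbf{h}_{2R}\mathbf{a}=h_{R2}/h_{22},
\]
which is well-defined whenever $h_{1R}^{[1]}h_{2R}^{[2]}-h_{1R}^{[2]}h_{2R}^{[1]}\neq 0$, and transmits $\mathbf{X}_R=\mathbf{a}\,Y_R$. This choice was engineered so that, after rescaling destination $i$'s output to have unit-variance noise, the equivalent scalar channel becomes $\tilde Y_i=\pm\sqrt{h_{ii}^2+h_{Ri}^2}\,X_i+\bigl[(h_{ii}h_{ij}+h_{Ri}h_{Rj})/\sqrt{h_{ii}^2+h_{Ri}^2}\bigr]X_j+\tilde Z_i$ with $\tilde Z_i\sim\mathcal{N}(0,1)$, so the desired user's SNR is precisely $(h_{ii}^2+h_{Ri}^2)P$. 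Solving the linear system explicitly yields $\|\mathbf{a}\|^2$ equal to the left-hand side of (\ref{eq:pr}), and hence the relay power constraint $\mathbb{E}\|\mathbf{X}_R\|^2=\|\mathbf{a}\|^2\bigl((h_{R1}^2+h_{R2}^2)P+1\bigr)\leq P_R$ is algebraically identical to condition (\ref{eq:pr}). The two inequalities of Definition 1 then express exactly that at destination $i$ the cross message can be decoded at the other user's rate while the own signal is treated as noise, so simultaneous non-unique decoding delivers the promised pair $\bigl(\tfrac12\log(1+(h_{11}^2+h_{R1}^2)P),\,\tfrac12\log(1+(h_{22}^2+h_{R2}^2)P)\bigr)$, and a standard Gaussian random-coding argument fills in the details.

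The principal obstacle I anticipate is the converse: justifying that Gaussian inputs maximize the outer-bound expression despite $X_R$ being an arbitrary causal functional of $Y_R$. The saving observation is that $X_R$ appears only as the known additive offset $\mathbf{h}_{1R}\mathbf{X}_R$ inside $Y_1$ and can therefore be subtracted, reducing the problem to scalar estimation of $X_1$ from the two parallel Gaussian observations $h_{R1}X_1+Z_R$ and $h_{11}X_1+Z_1$. The remaining steps---inverting the $2{\times}2$ system explicitly, verifying that (\ref{eq:pr}) is precisely the relay-power condition for the chosen $\mathbf{a}$, and checking the decoding hypotheses given Definition 1---are mechanical but not conceptually subtle.
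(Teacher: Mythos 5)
Your proposal is correct and follows essentially the same route as the paper. The converse specializes the genie-aided outer bound of Theorem~1 and uses Gaussian maximum entropy---your per-$Q$ LMMSE phrasing and the paper's determinant bound on $h(h_{11}X_1+Z_1,\,h_{R1}X_1+Z_R)$ are the same calculation---while achievability uses the identical two-antenna instantaneous AF vector solving $\mathbf{h}_{iR}\mathbf{a}=h_{Ri}/h_{ii}$ with simultaneous non-unique decoding, condition~(\ref{eq:pr}) being precisely the relay-power budget for that $\mathbf{a}$ and Definition~1 rendering the two sum-rate constraints of the induced Gaussian IC redundant.
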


\proof

\textbf{Achievability.}
Achievability follows from the simultaneous non-unique decoding rule \cite{text_Elgamal} and instantaneous amplify-and-forward scheme at each relay.
If we apply instantaneous amplify-and-forward relaying, the transmit signal of relay $k$ can be expressed as follows:
\begin{align*}
\textbf{X}_{R}= \left[\begin{array}{ccc}
    \alpha_{R1} \\
    \alpha_{R2} \\
  \end{array}\right]Y_{R}
\end{align*}
where $\alpha_{Ri}$ is the amplifying factor of the $i$-th antenna of the relay.
Let sources 1 and 2 transmit codewords using independent Gaussian codebooks with power P. Then, the received power of the relay becomes $(h_{R1}^2+h_{R2}^2)P+1$ and the transmit power constraint for the relay can be expressed as follows:
\begin{align*}
\alpha_{R1}^2+\alpha_{R2}^2 &\leq \frac{P_R}{(h_{R1}^2+h_{R2}^2)P+1}
\end{align*}

We set the amplify-and-forward factors of the relay satisfying the following equations.
\begin{align*}
&\textbf{h}_{1R}
\left[\begin{array}{ccc}
    \alpha_{R1} \\
    \alpha_{R2} \\
\end{array}\right]
=[h_{1R}^{[1]} \hspace{0.15 in} h_{1R}^{[2]}]
\left[\begin{array}{ccc}
    \alpha_{R1} \\
    \alpha_{R2} \\
\end{array}\right]
= \alpha_{R1}h_{1R}^{[1]} + \alpha_{R2}h_{1R}^{[2]} = \frac{h_{R1}}{h_{11}}\\
&\textbf{h}_{2R}
\left[\begin{array}{ccc}
    \alpha_{R1} \\
    \alpha_{R2} \\
\end{array}\right]
=[h_{2R}^{[1]} \hspace{0.15 in} h_{2R}^{[2]}]
\left[\begin{array}{ccc}
    \alpha_{R1} \\
    \alpha_{R2} \\
\end{array}\right]
=\alpha_{R1}h_{2R}^{[1]} + \alpha_{R2}h_{2R}^{[2]} = \frac{h_{R2}}{h_{22}}.
\end{align*}

By solving this for $\alpha_{R1}$ and $\alpha_{R2}$, we get:
\begin{align*}
\alpha_{R1}=\frac{h_{11}h_{R2}h_{1R}^{[2]}-h_{22}h_{R1}h_{2R}^{[2]}}{h_{11}h_{22}(h_{1R}^{[2]}h_{2R}^{[1]}-h_{1R}^{[1]}h_{2R}^{[2]})}\\
\alpha_{R2}=\frac{h_{22}h_{R1}h_{2R}^{[1]}-h_{11}h_{R2}h_{1R}^{[1]}}{h_{11}h_{22}(h_{1R}^{[2]}h_{2R}^{[1]}-h_{1R}^{[1]}h_{2R}^{[2]})}
\end{align*}

Using the above $\alpha_{R,1}$ and $\alpha_{R,2}$, the received signal of destination $i$ can be expressed as follows:
\begin{align*}
Y_{1}&=h_{11}X_1+h_{12}X_2+\textbf{h}_{1R}\textbf{X}_{R}+Z_1\\
&=\left(\frac{h_{11}^2+h_{R1}^2}{h_{11}}\right)X_1+\left(\frac{h_{11}h_{12}+h_{R1}h_{R2}}{h_{11}}\right)X_{2}+\left(\frac{h_{R1}}{h_{11}}\right)Z_{R}+Z_{1}\\
Y_{2}&=h_{21}X_1+h_{22}X_2+\textbf{h}_{2R}\textbf{X}_{R}+Z_2\\
&=\left(\frac{h_{21}h_{22}+h_{R1}h_{R2}}{h_{22}}\right)X_1+\left(\frac{h_{22}^2+h_{R2}^2}{h_{22}}\right)X_2+\left(\frac{h_{R2}}{h_{22}}\right)Z_{R}+Z_{2}
\end{align*}

Since $Y_1$ and $Y_2$ can be expressed in terms of $X_1$, $X_2$, and a scaled sum of independent Gaussian noises, we can equivalently regard this channel as the interference channel with Gaussian noise.
Therefore, using the simultaneous non-unique decoder, the achievable rate pairs $(R_1,R_2)$ can be expressed as follows:
\begin{align*}
&R_1 \leq \frac{1}{2}\log(1+(h_{11}^2+h_{R1}^2)P)\\
&R_2 \leq \frac{1}{2}\log(1+(h_{22}^2+h_{R2}^2)P)\\
&R_1+R_2 \leq \frac{1}{2}\log\left(1+\left(h_{11}^2+h_{R1}^2+\frac{(h_{11}h_{12}+h_{R1}h_{R2})^2}{h_{11}^2+h_{R1}^2}\right)P\right)\\
&R_1+R_2 \leq \frac{1}{2}\log\left(1+\left(h_{22}^2+h_{R2}^2+\frac{(h_{21}h_{22}+h_{R1}h_{R2})^2}{h_{22}^2+h_{R2}^2}\right)P\right)
\end{align*}

From the assumption of very strong Gaussian interference relay-without-delay channel, the constraints on $R_1+R_2$ are redundant since
\begin{align*}
R_1+R_2 &\leq \frac{1}{2}\log(1+(h_{11}^2+h_{R1}^2)P)+\frac{1}{2}\log(1+(h_{22}^2+h_{R2}^2)P) \\
&\leq \frac{1}{2}\log(1+(h_{11}^2+h_{R1}^2)P)+\frac{1}{2}\log\left(1+\frac{\left(h_{11}h_{12}+h_{R1}h_{R2}\right)^2P}{(h_{11}^2+h_{R1}^2)^2P+(h_{11}^2+h_{R1}^2)}\right) \\
&=\frac{1}{2}\log\left(1+\left(h_{11}^2+h_{R1}^2+\frac{\left(h_{11}h_{12}+h_{R1}h_{R2}\right)^2}{h_{11}^2+h_{R1}^2}\right)P\right)
\end{align*}

Thus, the achievable rate pairs $(R_1,R_2)$ can be expressed as follows:
\begin{align*}
&R_1 \leq \frac{1}{2}\log(1+(h_{11}^2+h_{R1}^2)P)\\
&R_2 \leq \frac{1}{2}\log(1+(h_{22}^2+h_{R2}^2)P)
\end{align*}

\textbf{Converse.}
Applying Theorem 1 to the Gaussian interference relay-without-delay channel, the upper bound on $R_1$ in Theorem 1 can be expressed as follows:
\begin{align*}
&R_1 \leq I(X_{1};Y_{R}|X_{2})+I(X_{1};Y_{1}|X_{2},Y_{R},\textbf{X}_{R})\\
&=h(Y_{R}|X_{2})-h(Y_{R}|X_1,X_2)+h(Y_1|X_2,Y_{R},\textbf{X}_{R})-h(Y_1|X_1,X_2,Y_{R},\textbf{X}_{R})\\
&\overset{(a)}=h(h_{R1}X_1+h_{R2}X_2+Z_{R}|X_{2})-h(Z_{R})+h(h_{11}X_1+h_{12}X_2+h_{1R}\textbf{X}_{R}+Z_1|X_2,Y_{R},\textbf{X}_{R})-h(Z_1)\\
&\overset{(b)}=h(h_{R1}X_1+Z_{R})-h(Z_{R})+h(h_{11}X_1+Z_1|h_{R1}X_1+Z_{R})-h(Z_1)\\
&=h(h_{11}X_1+Z_1,h_{R1}X_1+Z_{R})-h(Z_{R})-h(Z_1)\\
&\overset{(c)}\leq\frac{1}{2}\log(1+(h_{11}^2+h_{R1}^2)P)
\end{align*}
(a) and (b) hold since $Z_{R}$ and $Z_1$ are independent of $X_1,X_2,\textbf{X}_{R}$ and (c) holds since $h(h_{11}X_1+Z_1,h_{R1}X_1+Z_{R}) \leq \frac{1}{2}\log(2\pi e)^2 \left(1+h_{11}^2P+h_{R1}^2P\right)$.

Similarly, an upper bound on $R_2$ can be expressed as follows:
\begin{align*}
R_2 \leq \frac{1}{2}\log\left(1+\left(h_{22}^2+h_{R2}^2\right)P\right)
\end{align*}

Thus, the outer bound on the Gaussian interference relay-without-delay channel can be expressed as follows:
\begin{align*}
&R_1 \leq \frac{1}{2}\log(1+(h_{11}^2+h_{R1}^2)P)\\
&R_2 \leq \frac{1}{2}\log(1+(h_{22}^2+h_{R2}^2)P)
\end{align*}

Since the achievable rate region and the outer bound coincide, the capacity region of the very strong Gaussian interference relay-without-delay channel satisfying (\ref{eq:pr}) is given as follows:
\begin{align*}
&R_1 \leq \frac{1}{2}\log(1+(h_{11}^2+h_{R1}^2)P)\\
&R_2 \leq \frac{1}{2}\log(1+(h_{22}^2+h_{R2}^2)P)
\end{align*}
\endproof
In the following, we give an example of a Gaussian interference relay-without-delay channel satisfying the very strong interference condition.

\begin{example}
Consider a Gaussian interference relay-without-delay channel with channel coefficients $h_{11}=h_{22}=h_{R1}=h_{R2}=1, h_{12}=h_{21}=5, h_{2R}^{[1]}=h_{1R}^{[2]}=2, h_{1R}^{[1]}=h_{2R}^{[2]}=1 $ and $P = 1$. For this Gaussian interference relay-without-delay channel, very strong Gaussian interference relay-without-delay channel condition is satisfied since
\begin{align*}
& \left(h_{11}^2+h_{R1}^2\right)P = \left(h_{22}^2+h_{R2}^2\right)P = 2\\
& \frac{\left({h_{11}h_{12}+h_{R1}h_{R2}}\right)^2P}{\left(h_{11}^2+h_{R1}^2\right)^2P+h_{11}^2+h_{R1}^2}= \frac{\left({h_{21}h_{22}+h_{R1}h_{R2}}\right)^2P}{\left(h_{22}^2+h_{R2}^2\right)^2P+h_{22}^2+h_{R2}^2}=6
\end{align*}
For this channel, (\ref{eq:pr}) can be expressed as follows:
\begin{align*}
&\frac{(h_{2R}^{[1]}h_{22}h_{R1}-h_{1R}^{[1]}h_{11}h_{R2})^2+(h_{2R}^{[2]}h_{22}h_{R1}-h_{1R}^{[2]}h_{11}h_{R2})^2}{(h_{1R}^{[2]}h_{2R}^{[1]}-h_{1R}^{[1]}h_{2R}^{[2]})^2 h_{11}^2 h_{22}^2 }\left((h_{R1}^2+h_{R2}^2)P+1\right)=\frac{2}{3} \leq P_R
\end{align*}
For $\frac{2}{3} \leq P_R$, the capacity is given as follows:
\begin{align*}
&R_1 \leq \frac{1}{2}\log3, \\
&R_2 \leq \frac{1}{2}\log3
\end{align*}
\end{example}

\subsection{Strong Gaussian interference relay-without-delay channel}
\begin{definition}
A discrete memoryless interference relay-without-delay channel is said to be strong interference relay-without-delay channel if
\begin{align*}
I(X_1;Y_1|X_2,Y_R,X_R)&\leq I(X_1;Y_2|X_2,Y_R,X_R),\\
I(X_2;Y_2|X_1,Y_R,X_R)&\leq I(X_2;Y_1|X_1,Y_R,X_R)
\end{align*}
for all $p(x_1)p(x_2)p(y_{R}|x_{1},x_{2})p(x_{R}|y_{R})p(y_1,y_2|x_1,x_2,x_R,y_R)$.
\end{definition}
\begin{lemma}
For the Gaussian interference relay-without-delay channel, the strong interference relay-without-delay channel condition is equivalent to $|h_{11}| \leq |h_{21}|$ and $|h_{22}| \leq |h_{12}|$.
\end{lemma}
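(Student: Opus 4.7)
The plan is to prove the iff by focusing on the first inequality $I(X_1;Y_1|X_2,Y_R,\mathbf{X}_R)\leq I(X_1;Y_2|X_2,Y_R,\mathbf{X}_R)$; the second follows by the identical argument after swapping the roles of users~1 and~2. My strategy has three stages: first, reduce both conditional mutual informations to single-user quantities by peeling off everything that is either deterministic given the conditioning variables or independent of $X_1$ given them; second, handle the easy ``only if'' direction by specializing to Gaussian $X_1$; third, handle the harder ``if'' direction by a coupling-and-data-processing argument that works for every input distribution.

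For the reduction, I will observe that given $(X_2,\mathbf{X}_R)$ the shift $h_{12}X_2 + \mathbf{h}_{1R}\mathbf{X}_R$ in $Y_1$ (and likewise $h_{22}X_2 + \mathbf{h}_{2R}\mathbf{X}_R$ in $Y_2$) is deterministic, and given $X_2$ knowing $Y_R$ is equivalent to knowing $U := h_{R1}X_1 + Z_R$. The defining relay structure $p(\mathbf{x}_R\mid y_R)$ makes $\mathbf{X}_R$ conditionally independent of $X_1$ given $Y_R$, so conditioning on $\mathbf{X}_R$ contributes nothing to mutual information about $X_1$ beyond $U$. A short check of conditional densities confirms that each side of the strong-IC inequality collapses to $I(X_1;h_{j1}X_1+Z_j\mid U)$ for $j\in\{1,2\}$, and the condition becomes: for every $p(x_1)$,
\[
I(X_1;\, h_{11}X_1+Z_1 \mid U) \;\leq\; I(X_1;\, h_{21}X_1+Z_2 \mid U).
\]
Testing this against $X_1\sim N(0,P)$ and applying the conditional-Gaussian variance formula evaluates both sides in closed form; the inequality then reduces to $h_{11}^2\leq h_{21}^2$, yielding the ``only if'' direction.

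The ``if'' direction is the main obstacle, since for non-Gaussian $X_1$ variance comparison is unavailable and a naive entropy-power bound points the wrong way. My idea is a coupling trick: each side of the displayed inequality depends only on the joint marginal of $(X_1,Z_R,Z_1)$ or of $(X_1,Z_R,Z_2)$, so we are free to prescribe the joint of $(Z_1,Z_2)$ as long as their Gaussian marginals and independence from $(X_1,Z_R)$ are preserved. When $|h_{11}|\leq|h_{21}|$, I will set $Z_1 = (h_{11}/h_{21})Z_2 + \tilde Z$ with $\tilde Z\sim N(0,\,1-h_{11}^2/h_{21}^2)$ independent of everything else; the algebraic identity $h_{11}X_1+Z_1 = (h_{11}/h_{21})(h_{21}X_1+Z_2) + \tilde Z$ then exhibits the Markov chain $X_1 \to (U,\,h_{21}X_1+Z_2) \to h_{11}X_1+Z_1$, after which the data processing inequality closes the argument. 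Re-running the same reasoning on the symmetric condition $I(X_2;Y_2\mid\cdots)\leq I(X_2;Y_1\mid\cdots)$ produces the companion requirement $|h_{22}|\leq|h_{12}|$.
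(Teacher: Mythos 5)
Your proposal is correct and follows essentially the same route as the paper: the same reduction of each conditional mutual information to $I(X_1; h_{j1}X_1+Z_j \mid h_{R1}X_1+Z_R)$ via the relay's Markov structure and $X_2$-independence, the same Gaussian-input test for the ``only if'' direction, and the same degradedness idea for the ``if'' direction. The only cosmetic difference is that you construct the physical-degradation coupling $Z_1 = (h_{11}/h_{21})Z_2 + \tilde Z$ and apply data processing explicitly, whereas the paper invokes the standard ``degraded Gaussian BC $\Rightarrow$ more capable'' fact as a black box.
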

\begin{proof}
$I(X_1;Y_1|X_2,Y_R,\textbf{X}_R)$ can be expressed as follows:
\begin{align*}
&I(X_1;Y_1|X_2,Y_R,\textbf{X}_R)\\
&=h(Y_1|X_2,Y_R,\textbf{X}_R)-h(Y_1|X_1,X_2,Y_R,\textbf{X}_R)\\
&=h(h_{11}X_1+h_{12}X_2+\textbf{h}_{1R}\textbf{X}_R+Z_1|X_2,h_{R1}X_1+h_{R2}X_2+Z_R,\textbf{X}_R)-h(Z_1)\\
&=h(h_{11}X_1+h_{12}X_2+Z_1|X_2,h_{R1}X_1+Z_R,\textbf{X}_R)-h(Z_1)\\
&\overset{(a)}=h(h_{11}X_1+h_{12}X_2+Z_1|X_2,h_{R1}X_1+Z_R)-h(Z_1)\\
&\overset{(b)}=h(h_{11}X_1+Z_1|h_{R1}X_1+Z_R)-h(Z_1)
\end{align*}
where (a) holds since $h_{11}X_1+h_{12}X_2+Z_1 \rightarrow (X_2,h_{R1}X_1+Z_R) \rightarrow X_R$ and (b) holds since $X_2$ is independent of $(X_1,Z_R,Z_1)$.

Similarly, $I(X_1;Y_2|X_2,Y_R,X_R)$, $I(X_2;Y_2|X_1,Y_R,X_R)$, and $I(X_2;Y_1|X_1,Y_R,X_R)$ can be expressed as follows:
\begin{align*}
I(X_1;Y_2|X_2,Y_R,X_R)&=h(h_{21}X_1+Z_2|h_{R1}X_1+Z_R)-h(Z_2)\\
I(X_2;Y_2|X_1,Y_R,X_R)&=h(h_{22}X_2+Z_2|h_{R2}X_2+Z_R)-h(Z_2)\\
I(X_2;Y_1|X_1,Y_R,X_R)&=h(h_{12}X_2+Z_1|h_{R2}X_2+Z_R)-h(Z_1)
\end{align*}
If $|h_{11}| \leq |h_{21}|$ and $|h_{22}| \leq |h_{12}|$, then the Gaussian BC's $X_1$ to $[(h_{21}X_1+Z_2,h_{R1}X_1+Z_R),(h_{11}X_1+Z_1,h_{R1}X_1+Z_R)]$ and $X_2$ to $[(h_{12}X_2+Z_1,h_{R2}X_2+Z_R),(h_{22}X_2+Z_2,h_{R2}X_2+Z_R)]$ are both degraded, and hence are more capable. Thus, we get $I(X_1;Y_2|X_2,Y_R,X_R) \geq I(X_1;Y_1|X_2,Y_R,X_R)$ and $I(X_2;Y_1|X_1,Y_R,X_R) \geq I(X_2;Y_2|X_1,Y_R,X_R)$ for all $p(x_1)p(x_2)p(x_R|y_R)$. Thus, we get $I(X_1;h_{21}X_1+Z_2,h_{R1}X_1+Z_R) \geq I(X_1;h_{11}X_1+Z_1,h_{R1}X_1+Z_R)$ and $I(X_2;h_{12}X_2+Z_1,h_{R2}X_2+Z_R) \geq I(X_2;h_{22}X_2+Z_2,h_{R2}X_2+Z_R)$ for all $p(x_1)p(x_2)p(x_R|y_R)$. To prove the other direction, assume that $h(h_{11}X_1+Z_1|h_{R1}X_1+Z_R) \leq h(h_{21}X_1+Z_2|h_{R1}X_1+Z_R)$ and $h(h_{22}X_2+Z_2|h_{R2}X_2+Z_R) \leq h(h_{12}X_2+Z_1|h_{R2}X_2+Z_R)$. Assuming $X_1 \sim N(0,P)$ and $X_2 \sim N(0,P)$, we get $|h_{11}| \leq |h_{21}|$ and $|h_{22}| \leq |h_{12}|$.
\end{proof}

\begin{lemma}
For the Gaussian interference relay-without-delay channel with strong interference,
\begin{align*}
I(X_1^n;Y_1^n|X_2^n,Y_R^n)&\leq I(X_1^n;Y_2^n|X_2^n,Y_R^n),\\
I(X_2^n;Y_2^n|X_1^n,Y_R^n)&\leq I(X_2^n;Y_1^n|X_1^n,Y_R^n)
\end{align*}
for all $p(x_1^n)p(x_2^n)$ and $x_{R,j}=f_j(y_{R}^j)$ for all $j \in \{1,2,...n\}$ for all $n \geq 1$.
\end{lemma}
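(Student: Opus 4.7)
I would reduce each $n$-letter inequality to a single-letter stochastic degradation, exploiting (i) the memoryless product structure of $(Z_1^n,Z_2^n,Z_R^n)$, (ii) the fact that $\textbf{X}_R^n$ is a deterministic function of $Y_R^n$ via $x_{R,j}=f_j(y_R^j)$, and (iii) the independence of $X_2^n$ from $(X_1^n,Z_1^n,Z_2^n,Z_R^n)$. I handle the first inequality; the second follows by swapping the roles of sources $1$ and $2$.

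First, conditioning on $(X_2^n,Y_R^n)$ determines $\textbf{X}_R^n$, so the terms $h_{12}X_2^n+\textbf{h}_{1R}\textbf{X}_R^n$ and $h_{22}X_2^n+\textbf{h}_{2R}\textbf{X}_R^n$ can be subtracted from $Y_1^n$ and $Y_2^n$, respectively. Similarly, conditioning on $X_2^n$ reduces $Y_R^n$ to the equivalent observation $h_{R1}X_1^n+Z_R^n$, after which $X_2^n$ can be dropped from the conditioning since the remaining variables are independent of it. This yields
\begin{align*}
I(X_1^n;Y_1^n\mid X_2^n,Y_R^n)&=h(h_{11}X_1^n+Z_1^n\mid h_{R1}X_1^n+Z_R^n)-nh(Z_1),\\
I(X_1^n;Y_2^n\mid X_2^n,Y_R^n)&=h(h_{21}X_1^n+Z_2^n\mid h_{R1}X_1^n+Z_R^n)-nh(Z_2),
\end{align*}
and since $h(Z_1)=h(Z_2)$ it suffices to prove $I(X_1^n;V_1^n)\le I(X_1^n;V_2^n)$, where $V_j^n$ is the $n$-letter output of the memoryless vector channel $V_j=(h_{j1}X_1+Z_j,\,h_{R1}X_1+Z_R)$.

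Next I would establish single-letter stochastic degradation of $V_1$ from $V_2$. Using $|h_{11}|\le|h_{21}|$ from Lemma~1, set $\alpha=h_{11}/h_{21}$ and $\gamma=\sqrt{1-\alpha^2}\ge 0$, and draw $\tilde Z_i\sim N(0,1)$ independent of everything else. Define $\tilde V_{1,i}=\bigl(\alpha(h_{21}X_{1,i}+Z_{2,i})+\gamma\tilde Z_i,\;h_{R1}X_{1,i}+Z_{R,i}\bigr)$. A direct mean/covariance check shows that, conditional on $X_{1,i}$, the vector $\tilde V_{1,i}$ has exactly the same Gaussian law as $V_{1,i}$: the first component is $N(h_{11}X_{1,i},1)$, the second is $N(h_{R1}X_{1,i},1)$, and they are conditionally independent because no $Z_{R,i}$ term was introduced in the first component. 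Since $\tilde V_{1,i}$ is built from the $i$-th symbol of $V_2$ alone plus a fresh $\tilde Z_i$, for any input distribution $p(x_1^n)$ the joint law of $(X_1^n,\tilde V_1^n)$ equals that of $(X_1^n,V_1^n)$, while $\tilde V_1^n$ is a function of $V_2^n$ and independent noise. Data processing then gives $I(X_1^n;V_1^n)=I(X_1^n;\tilde V_1^n)\le I(X_1^n;V_2^n)$, as required.

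The main obstacle is the degradation step, and in particular making it valid for arbitrary (not necessarily i.i.d.\ or Gaussian) $p(x_1^n)$. This dictates a strictly symbol-by-symbol construction with no cross-time mixing, so that matching joint laws with $X_1^n$ is forced by the memoryless noise structure rather than by the input distribution. The strong interference inequality $|h_{11}|\le|h_{21}|$ is exactly what guarantees $\gamma^2\ge 0$, making the construction feasible; any attempt to also route $h_{R1}X_{1,i}+Z_{R,i}$ through the first component would break the required conditional independence. The second inequality is obtained by the identical argument with sources $1$ and $2$ interchanged, invoking $|h_{22}|\le|h_{12}|$ from Lemma~1.
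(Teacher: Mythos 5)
Your proof is correct, but it takes a genuinely different route than the paper. The paper proceeds by induction on $n$: the base case absorbs $X_{R,1}=f_1(Y_{R,1})$ into the conditioning and invokes the defining strong-interference inequality; the inductive step is a lengthy chain-rule decomposition around symbol $k+1$ that identifies a conditioning variable $U=(X_{2,k+1},Y_{R,k+1},\hat Y_{1,k+1})$ under which the first $k$ symbols still have a product input and a memoryless channel, so the $k$-letter hypothesis can be applied alongside the single-letter condition. You instead reduce both sides (by the same elimination of $\textbf{X}_R^n$ and $X_2^n$) to mutual informations over the memoryless vector channels $X_1\mapsto(h_{j1}X_1+Z_j,\,h_{R1}X_1+Z_R)$, construct an explicit per-symbol stochastic degradation $\tilde V_1$ of $V_2$ that matches the conditional law of $V_1$, and close with data processing over the product channel. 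Your route buys transparency and economy: there is no induction, the strong-interference condition $|h_{11}|\le|h_{21}|$ appears precisely as the feasibility of $\gamma=\sqrt{1-\alpha^2}$, and arbitrary $p(x_1^n)$ is handled automatically because the degradation is built symbol by symbol with fresh independent noise, with the second component passed through untouched so the required conditional independence given $X_{1,i}$ is preserved. The paper's route mirrors the Costa--El Gamal argument for discrete strong interference channels and is agnostic to the Gaussian structure, so it would transfer to more general memoryless models, whereas your construction leans on the additive Gaussian noise to synthesize $\tilde V_1$; since the lemma is stated only for the Gaussian model this costs nothing here. One minor bookkeeping point worth making explicit: if $h_{21}=0$ then strong interference forces $h_{11}=0$ and the inequality holds trivially, so the normalization $\alpha=h_{11}/h_{21}$ should be read with that degenerate case set aside.
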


\begin{proof}
For $n=1$, Lemma 2 holds since
\begin{align*}
&I(X_{1,1};Y_{2,1}|X_{2,1},Y_{R,1}) - I(X_{1,1};Y_{1,1}|X_{2,1},Y_{R,1})\\
&\overset{(a)}=I(X_{1,1};Y_{2,1}|X_{2,1},Y_{R,1},X_{R,1}) - I(X_{1,1};Y_{1,1}|X_{2,1},Y_{R,1},X_{R,1})\\
&\overset{(b)} \geq 0,
\end{align*}
where (a) holds since $X_{R,1}=f_1(Y_{R,1})$ and (b) holds from the strong interference condition.

Now assume that Lemma 2 holds for $n=k$, i.e.,
\begin{align*}
I(X_1^k;Y_1^k|X_2^k,Y_R^k)&\leq I(X_1^k;Y_2^k|X_2^k,Y_R^k)
\end{align*}
for all $p(x_1^k)p(x_2^k)$ and $x_{R,j}=f_j(y_{R}^j)$ for all $j \in \{1,2,...n\}$.

Let $\hat{Y}_{i,j}=h_{i1,j}X_{1,j}+h_{i2,j}X_{2,j}+Z_{i,j}$ and $\hat{Y}_i^k=(\hat{Y}_{i,1},\hat{Y}_{i,2},...,\hat{Y}_{i,k})$ where $i \in \{1,2\}$, $j \in \{1,2,...,n\}$.
Then, the above condition is equivalent to the following:
\begin{align*}
I(X_1^k;\hat{Y}_1^k|X_2^k,Y_R^k)&\leq I(X_1^k;\hat{Y}_2^k|X_2^k,Y_R^k)
\end{align*}
for all $p(x_1^k)p(x_2^k)$ since $X_{R}^k=f^k(Y_{R}^k)$. This implies that
\begin{align*}
I(X_1^k;\hat{Y}_1^k|X_2^k,Y_R^k,U)&\leq I(X_1^k;\hat{Y}_2^k|X_2^k,Y_R^k,U)
\end{align*}
for all $p(u)p(x_1^k|u)p(x_2^k|u)$.

Then $I(X_1^{k+1};Y_2^{k+1}|X_2^{k+1},Y_R^{k+1}) - I(X_1^{k+1};Y_1^{k+1}|X_2^{k+1},Y_R^{k+1})$ can be expressed as follows:
\begin{align*}
&I(X_1^{k+1};Y_2^{k+1}|X_2^{k+1},Y_R^{k+1})-I(X_1^{k+1};Y_1^{k+1}|X_2^{k+1},Y_R^{k+1})\\
&\overset{(a)}=I(X_1^{k+1};\hat{Y}_2^{k+1}|X_2^{k+1},Y_R^{k+1})-I(X_1^{k+1};\hat{Y}_1^{k+1}|X_2^{k+1},Y_R^{k+1})\\
&=I(X_1^{k+1};\hat{Y}_2^{k}|X_2^{k+1},Y_R^{k+1})+I(X_1^{k+1};\hat{Y}_{2,k+1}|X_2^{k+1},Y_R^{k+1},\hat{Y}_2^{k})\\
&-I(X_1^{k+1};\hat{Y}_{1,k+1}|X_2^{k+1},Y_R^{k+1})-I(X_1^{k+1};\hat{Y}_1^{k}|X_2^{k+1},Y_R^{k+1},\hat{Y}_{1,k+1})\\
&=I(X_1^{k+1},\hat{Y}_{1,k+1};\hat{Y}_2^{k}|X_2^{k+1},Y_R^{k+1})+I(X_1^{k+1};\hat{Y}_{2,k+1}|X_2^{k+1},Y_R^{k+1},\hat{Y}_2^{k})\\
&-I(X_1^{k+1},\hat{Y}_2^{k};\hat{Y}_{1,k+1}|X_2^{k+1},Y_R^{k+1})-I(X_1^{k+1};\hat{Y}_1^{k}|X_2^{k+1},Y_R^{k+1},\hat{Y}_{1,k+1})\\
&=I(\hat{Y}_{1,k+1};\hat{Y}_2^{k}|X_2^{k+1},Y_R^{k+1})+I(X_1^{k};\hat{Y}_2^{k}|X_2^{k+1},Y_R^{k+1},\hat{Y}_{1,k+1})\\
&+I(X_{1,k+1};\hat{Y}_2^{k}|X_1^{k},X_2^{k+1},Y_R^{k+1},\hat{Y}_{1,k+1})+I(X_{1,k+1};\hat{Y}_{2,k+1}|X_2^{k+1},Y_R^{k+1},\hat{Y}_2^{k})\\
&+I(X_{1}^k;\hat{Y}_{2,k+1}|X_{1,k+1},X_2^{k+1},Y_R^{k+1},\hat{Y}_2^{k})-I(\hat{Y}_2^{k};\hat{Y}_{1,k+1}|X_2^{k+1},Y_R^{k+1})\\
&-I(X_{1,k+1};\hat{Y}_{1,k+1}|X_2^{k+1},Y_R^{k+1},\hat{Y}_2^{k})-I(X_{1}^k;\hat{Y}_{1,k+1}|X_{1,k+1},X_2^{k+1},Y_R^{k+1},\hat{Y}_2^{k})\\
&-I(X_{1}^k;\hat{Y}_1^{k}|X_2^{k+1},Y_R^{k+1},\hat{Y}_{1,k+1})-I(X_{1,k+1};\hat{Y}_1^{k}|X_1^k,X_2^{k+1},Y_R^{k+1},\hat{Y}_{1,k+1})\\
&\overset{(b)}=I(X_1^{k};\hat{Y}_2^{k}|X_2^{k+1},Y_R^{k+1},\hat{Y}_{1,k+1})+I(X_{1,k+1};\hat{Y}_{2,k+1}|X_2^{k+1},Y_R^{k+1},\hat{Y}_2^{k})\\
&-I(X_{1,k+1};\hat{Y}_{1,k+1}|X_2^{k+1},Y_R^{k+1},\hat{Y}_2^{k})-I(X_{1}^k;\hat{Y}_1^{k}|X_2^{k+1},Y_R^{k+1},\hat{Y}_{1,k+1})\\
&\overset{(c)}\geq I(X_{1,k+1};\hat{Y}_{2,k+1}|X_2^{k+1},Y_R^{k+1},\hat{Y}_2^{k})-I(X_{1,k+1};\hat{Y}_{1,k+1}|X_2^{k+1},Y_R^{k+1},\hat{Y}_2^{k})\\
&\overset{(d)}\geq 0
\end{align*}
where (a) holds since $X_{R}^k=f^k(Y_{R}^k)$ and translation property of differential entropy, (b) holds since $(X_{1,k+1},X_{2,k+1},Y_{R}^{k+1},\hat{Y}_{1,k+1}) \rightarrow (X_1^k,X_2^k) \rightarrow \hat{Y}_2^k$, $(X_{1}^k,X_{2}^k,Y_{R}^{k+1},\hat{Y}_{2,k}) \rightarrow (X_{1,k+1},X_{2,k+1})$ $\rightarrow \hat{Y}_{2,k+1}$, $(X_{1}^k,X_{2}^k,Y_{R}^{k+1},\hat{Y}_{2,k}) \rightarrow (X_{1,k+1},X_{2,k+1}) \rightarrow \hat{Y}_{1,k+1}$, and $(X_{1,k+1},X_{2,k+1},Y_{R}^{k+1},\hat{Y}_{1,k+1})$ $\rightarrow (X_{1}^k,X_{2}^k) \rightarrow \hat{Y}_{1}^k$, (c) follows from the assumption since $(X_{2,k+1},Y_{R,k+1},\hat{Y}_{1,k+1}) \rightarrow (X_2^k,X_1^k) \rightarrow (\hat{Y}_1^k,\hat{Y}_2^k,Y_{R}^k)$, $X_1^k \rightarrow (X_{2,k+1},Y_{R,k+1},\hat{Y}_{1,k+1}) \rightarrow X_2^k$, and the memoryless channel property, and (d) follows from the strong interference condition since $(X_{2}^{k},Y_{R}^{k},\hat{Y}_{2}^{k}) \rightarrow (X_{1,k+1},X_{2,k+1}) \rightarrow (\hat{Y}_{1,k+1},\hat{Y}_{2,k+1},Y_{R,k+1})$ and $X_{1,k+1} \rightarrow (X_{2}^{k},Y_{R}^{k},\hat{Y}_{2}^{k}) \rightarrow X_{2,k+1}$.
Similarly, we can prove that $I(X_2^n;Y_2^n|X_1^n,Y_R^n) \leq I(X_2^n;Y_1^n|X_1^n,Y_R^n)$.
\end{proof}

Using Lemma 2, we get the following outer bound on the capacity region for the strong Gaussian interference relay-without-delay channel.
\begin{theorem}
For the strong Gaussian interference relay-without-delay channel, the capacity region is contained in the set of rate pairs $(R_1,R_2)$ such that
\begin{align*}
R_1 &\leq \frac{1}{2}\log\left(1+(h_{11}^2+h_{R1}^2)P\right)\\
R_2 &\leq \frac{1}{2}\log\left(1+(h_{22}^2+h_{R2}^2)P\right)\\
R_1+R_2 &\leq \frac{1}{2}\log\left(1+\left(h_{11}^2+h_{12}^2+h_{R1}^2+h_{R2}^2\right)P+\left((h_{11}^2+h_{12}^2)(h_{R1}^2+h_{R2}^2)-(h_{11}h_{R1}+h_{12}h_{R2})^2\right)P^2\right)\\
R_1+R_2 &\leq \frac{1}{2}\log\left(1+\left(h_{21}^2+h_{22}^2+h_{R1}^2+h_{R2}^2\right)P+\left((h_{21}^2+h_{22}^2)(h_{R1}^2+h_{R2}^2)-(h_{21}h_{R1}+h_{22}h_{R2})^2\right)P^2\right)
\end{align*}
\end{theorem}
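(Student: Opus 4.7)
The plan is to combine three ingredients: (i) the single-letter outer bound from Theorem~1 for the two individual rate constraints, (ii) a genie-aided Fano argument in which each receiver is additionally given the relay-received sequence $Y_R^n$, and (iii) Lemma~2, which lifts the strong-interference inequality to the $n$-letter form needed to exchange $Y_1^n$ and $Y_2^n$ inside the corresponding mutual information.

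For the individual bounds $R_i\le\frac{1}{2}\log(1+(h_{ii}^2+h_{Ri}^2)P)$ I would simply reuse the Gaussian converse already carried out in the proof of Theorem~2: substitute the Gaussian model into $I(X_1;Y_R|X_2,Q)+I(X_1;Y_1|X_2,Y_R,\textbf{X}_R,Q)$, use independence of $Z_R$ and $Z_1$ to collapse the differential entropies to $h(h_{11}X_1+Z_1,h_{R1}X_1+Z_R)-h(Z_1)-h(Z_R)$, and close with Gaussian max-entropy under $E[X_1^2]\le P$; the $R_2$ bound is symmetric.

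For each sum-rate bound I would give both receivers the genie side-information $Y_R^n$. Fano's inequality together with $M_1\perp M_2$ yields
\begin{align*}
nR_1&\le I(X_1^n;Y_1^n,Y_R^n\,|\,X_2^n)+n\epsilon_n,\\
nR_2&\le I(X_2^n;Y_2^n,Y_R^n)+n\epsilon_n.
\end{align*}
Chain-ruling the $R_1$ bound as $I(X_1^n;Y_R^n|X_2^n)+I(X_1^n;Y_1^n|X_2^n,Y_R^n)$ and using the first inequality of Lemma~2 to replace $Y_1^n$ by $Y_2^n$ in the second term recombines into $I(X_1^n;Y_2^n,Y_R^n|X_2^n)$. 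Adding the $R_2$ bound and invoking the chain rule together with $X_1^n\perp X_2^n$ gives $n(R_1+R_2)\le I(X_1^n,X_2^n;Y_2^n,Y_R^n)+2n\epsilon_n$; applying the second inequality of Lemma~2 symmetrically (swap the roles of $R_1,R_2$ and of $Y_1^n,Y_2^n$) produces the companion bound $n(R_1+R_2)\le I(X_1^n,X_2^n;Y_1^n,Y_R^n)+2n\epsilon_n$.

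Single-letterizing $I(X_1^n,X_2^n;Y_2^n,Y_R^n)$ is the step where the without-delay relay structure matters: since $\textbf{X}_R^n$ is a deterministic function of $Y_R^n$, translating $Y_2^n$ by $-\textbf{h}_{2R}\textbf{X}_R^n$ leaves the joint differential entropy of $(Y_2^n,Y_R^n)$ unchanged and reduces the problem to a two-input, two-output vector Gaussian MAC with channel matrix $H=\begin{pmatrix}h_{21}&h_{22}\\ h_{R1}&h_{R2}\end{pmatrix}$ and independent unit-variance noises $(Z_{2,i},Z_{R,i})$. Subadditivity, concavity of $\log\det$ in the time-averaged input covariance, and Gaussian max-entropy under the per-source power constraint $P$ (using $X_1^n\perp X_2^n$) bound the normalized mutual information by $\frac{1}{2}\log\det(I_2+PHH^T)$; expanding this $2\times 2$ determinant as $1+P\,\tr(HH^T)+P^2\det(HH^T)$ reproduces the stated polynomial in $P$, and the symmetric computation for $Y_1^n$ yields the other sum-rate bound. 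The main obstacle I anticipate is bookkeeping the without-delay relay through the entropy manipulations---specifically, verifying that the hypothesis $x_{R,j}=f_j(y_R^j)$ of Lemma~2 is preserved after the $Y_R^n$-augmented Fano step, and that the $\textbf{X}_R^n$-translation is rigorous since $\textbf{X}_R^n$ is itself a random object induced by $Y_R^n$ through the same encoder/relay pair used to define the joint distribution.
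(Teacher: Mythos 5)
Your proposal is correct and follows essentially the same route as the paper: genie the relay's received sequence $Y_R^n$ to both receivers, use Lemma~2 (which the paper cites as ``Lemma~1'' in step (b), an apparent typo) to replace $Y_1^n$ by $Y_2^n$ inside $I(X_1^n;\cdot\,|X_2^n,Y_R^n)$, combine via the chain rule into $I(X_1^n,X_2^n;Y_2^n,Y_R^n)$, eliminate the relay contribution by exploiting that $\mathbf{X}_R^n$ is a deterministic function of $Y_R^n$, and close with Gaussian maximum-entropy under the per-source power constraint; the $\log\det(I_2+PHH^T)$ expansion $1+P\,\tr(HH^T)+P^2\det(HH^T)$ reproduces exactly the paper's polynomial (the $P^2$ coefficient is $(h_{21}h_{R2}-h_{22}h_{R1})^2$ in both forms). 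The only stylistic difference is in the single-letterization: the paper introduces the auxiliary $X_{R,i}$ into the conditioning term by term, verifies two Markov chains, and passes to a time-sharing variable $Q$ before doing the Gaussian computation, whereas you propose translating $Y_2^n$ by $-\mathbf{h}_{2R}\mathbf{X}_R^n$ while still at the $n$-letter level and then invoking subadditivity of entropy together with concavity/monotonicity of $\log\det$ in the time-averaged input covariance. Both are correct and deliver the identical bound; your version arguably shortens the bookkeeping by collapsing the relay structure before, rather than after, single-letterizing. The two concerns you flag at the end are not real obstacles: Lemma~2's hypothesis $x_{R,j}=f_j(y_R^j)$ is exactly the channel's relay-encoding constraint and is untouched by revealing $Y_R^n$ as genie side information, and the translation is a bijection of $(Y_2^n,Y_R^n)$ that preserves differential entropy precisely because $\mathbf{X}_R^n$ is a deterministic (measurable) function of $Y_R^n$.
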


\begin{proof}
From Fano's inequality \cite{text_Cover}, we get
\begin{align*}
&H(M_1|Y_1^n,Y_{R}^n,M_2)\leq H(M_1|Y_1^n)\leq H(M_1|\hat{M_1})\leq n\epsilon_n,\\
&H(M_2|Y_2^n,Y_{R}^n,M_1)\leq H(M_2|Y_2^n)\leq H(M_2|\hat{M_2})\leq n\epsilon_n,\\
&H(M_1,M_2|Y_1^n,Y_2^n,Y_{R}^n)\leq H(M_1,M_2|Y_1^n,Y_2^n)\leq H(M_1,M_2|\hat{M_1},\hat{M_2})\leq n\epsilon_n.
\end{align*}
where $\epsilon_n$ tends to zero as $n \rightarrow \infty$.
An upper bound on $nR_1+nR_2$ can be expressed as follows:
\begin{align*}
nR_1+nR_2&=H(M_1)+H(M_2)\\
&\overset{(a)}\leq I(X_1^n;Y_1^n,Y_{R}^n)+I(X_2^n;Y_2^n,Y_{R}^n)+2n\epsilon_n\\
&\leq I(X_1^n;Y_1^n,Y_{R}^n|X_2^n)+I(X_2^n;Y_2^n,Y_{R}^n)+2n\epsilon_n\\
&= I(X_1^n;Y_{R}^n|X_2^n)+I(X_1^n;Y_1^n|Y_{R}^n,X_2^n)+I(X_2^n;Y_{R}^n)+I(X_2^n;Y_{2}^n|Y_{R}^n)+2n\epsilon_n\\
&= I(X_1^n,X_2^n;Y_{R}^n)+I(X_1^n;Y_1^n|Y_{R}^n,X_2^n)+I(X_2^n;Y_{2}^n|Y_{R}^n)+2n\epsilon_n\\
&\overset{(b)}\leq I(X_1^n,X_2^n;Y_{R}^n)+I(X_1^n;Y_2^n|Y_{R}^n,X_2^n)+I(X_2^n;Y_{2}^n|Y_{R}^n)+2n\epsilon_n\\
&=I(X_1^n,X_2^n;Y_{R}^n)+I(X_1^n,X_2^n;Y_2^n|Y_{R}^n)+2n\epsilon_n\\
&=\sum_{i=1}^nI(X_1^n,X_2^n;Y_{R,i}|Y_{R}^{i-1})+\sum_{i=1}^nI(X_1^n,X_2^n;Y_{2,i}|Y_{2}^{i-1},Y_{R}^n)+2n\epsilon_n\\
&\overset{(c)}\leq\sum_{i=1}^nI(X_1^n,X_2^n,Y_{R}^{i-1};Y_{R,i})+\sum_{i=1}^nI(X_1^n,X_2^n,Y_{2}^{i-1},Y_{R}^{i-1},Y_{R,i+1}^{n};Y_{2i}|Y_{R,i},X_{R,i})+2n\epsilon_n\\
&\overset{(d)}=\sum_{i=1}^nI(X_{1i},X_{2i};Y_{R,i})+\sum_{i=1}^nI(X_{1i},X_{2i};Y_{2i}|Y_{R,i},X_{R,i})+2n\epsilon_n\\
&=nI(X_{1Q},X_{2Q};Y_{R,Q}|Q)+nI(X_{1Q},X_{2Q};Y_{2Q}|Y_{R,Q},X_{R,Q},Q)+2n\epsilon_n\\
&=nI(X_{1},X_{2};Y_{R}|Q)+nI(X_{1},X_{2};Y_{2}|Y_{R},X_{R},Q)+2n\epsilon_n
\end{align*}
where (a) follows from Fano's inequality, (b) follows from Lemma 1, (c) holds since $X_{R,i}=f_i(Y_{R}^{i})$, and (d) holds since $(X_1^{i-1},X_{1,i+1}^{n},X_2^{i-1},X_{2,i+1}^{n},Y_2^{i-1},Y_{R}^{i-1},Y_{R,i+1}^{n}) \rightarrow (X_{1i},X_{2i},Y_{R,i},X_{R,i})$ $\rightarrow Y_{2i}$ and $(X_1^{i-1},X_{1,i+1}^{n},X_2^{i-1},X_{2,i+1}^{n},Y_{R}^{i-1}) \rightarrow (X_{1i},X_{2i}) \rightarrow Y_{R,i}$. In the above, $Q$ is a time sharing random variable such that $Q \sim \operatorname{Unif}[1:n]$ and independent of $(X_1^n,X_2^n,Y_1^n,Y_2^n,Y_R^n,X_R^n)$ and we define $X_1=X_{1Q}$, $X_2=X_{2Q}$, $Y_1=Y_{1Q}$, $Y_2=Y_{2Q}$, $Y_R=Y_{RQ}$, and $X_R=X_{RQ}$. Similarly, we get the following upper bound on $R_1+R_2$.
\begin{align*}
R_1+R_2&\leq I(X_{1},X_{2};Y_{R}|Q)+I(X_{1},X_{2};Y_{1}|Y_{R},X_{R},Q)
\end{align*}
Combining this result with Theorem 1, we get the following outer bound for the strong Gaussian interference relay-without-delay channel
\begin{align*}
R_1 &\leq I(X_{1};Y_{R}|X_{2},Q)+I(X_{1};Y_{1}|X_{2},Y_{R},X_{R},Q),\\
R_2 &\leq I(X_{2};Y_{R}|X_{1},Q)+I(X_{2};Y_{2}|X_{1},Y_{R},X_{R},Q),\\
R_1+R_2 &\leq I(X_{1},X_{2};Y_{R}|Q)+ \min \{I(X_{1},X_{2};Y_{2}|Y_{R},X_{R},Q),I(X_{1},X_{2};Y_{1}|Y_{R},X_{R},Q)\}
\end{align*}
for all $p(q)p(x_1|q)p(x_2|q)p(x_{R}|y_{R},q)$.
Following the same procedure in the converse proof in subsection IV. A, we get the following upper bounds on $R_1$ and $R_2$
\begin{align*}
R_1 &\leq \frac{1}{2}\log\left(1+(h_{11}^2+h_{R1}^2)P\right)\\
R_2 &\leq \frac{1}{2}\log\left(1+(h_{22}^2+h_{R2}^2)P\right).
\end{align*}
For the Gaussian case, the above upper bound on $R_1+R_2$ can be expressed as follows:
\begin{align*}
&R_1+R_2 \leq I(X_{1},X_{2};Y_{R})+I(X_{1},X_{2};Y_{1}|Y_{R},\textbf{X}_{R})\\
&=h(Y_{R})-h(Y_{R}|X_1,X_2)+h(Y_1|Y_{R},\textbf{X}_{R})-h(Y_1|X_1,X_2,Y_{R},\textbf{X}_{R})\\
&\overset{(a)}=h(h_{R1}X_1+h_{R2}X_2+Z_{R})-h(Z_{R})+h(h_{11}X_1+h_{12}X_2+h_{1R}\textbf{X}_{R}+Z_1|Y_{R},\textbf{X}_{R})-h(Z_1)\\
&\overset{(b)}=h(h_{R1}X_1+h_{R2}X_2+Z_{R})-h(Z_{R})+h(h_{11}X_1+h_{12}X_2+Z_1|h_{R1}X_1+h_{R2}X_2+Z_{R})-h(Z_1)\\
&=h(h_{R1}X_1+h_{R2}X_2+Z_{R},h_{11}X_1+h_{12}X_2+Z_1)-h(Z_{R})-h(Z_1)\\
&\leq \frac{1}{2}\log\left(1+\left(h_{11}^2+h_{12}^2+h_{R1}^2+h_{R2}^2\right)P+\left((h_{11}^2+h_{12}^2)(h_{R1}^2+h_{R2}^2)-(h_{11}h_{R1}+h_{12}h_{R2})^2\right)P^2\right)
\end{align*}
where (a) and (b) hold since $Z_{R}$ and $Z_1$ are independent of $X_1$, $X_2$, and $\textbf{X}_{R}$.

Similarly, we get,
\begin{align*}
R_1+R_2 &\leq \frac{1}{2}\log\left(1+\left(h_{21}^2+h_{22}^2+h_{R1}^2+h_{R2}^2\right)P+\left((h_{21}^2+h_{22}^2)(h_{R1}^2+h_{R2}^2)-(h_{21}h_{R1}+h_{22}h_{R2})^2\right)P^2\right),
\end{align*}
which concludes the theorem.
\end{proof}

\begin{theorem}
The capacity region of the strong Gaussian interference relay-without-delay channel for
\begin{align}
&\frac{(h_{2R}^{[1]}h_{22}h_{R1}-h_{1R}^{[1]}h_{11}h_{R2})^2+(h_{2R}^{[2]}h_{22}h_{R1}-h_{1R}^{[2]}h_{11}h_{R2})^2}{(h_{1R}^{[2]}h_{2R}^{[1]}-h_{1R}^{[1]}h_{2R}^{[2]})^2 h_{11}^2 h_{22}^2 } \leq \frac{P_R}{(h_{R1}^2+h_{R2}^2)P+1},\label{eq:pr2} \\
&(h_{11},h_{12})=a_0(h_{21},h_{22})=a_1(h_{R1},h_{R2}) \nonumber
\end{align}
is the set of rate pairs $(R_1,R_2)$ such that
\begin{align*}
R_1 &\leq \frac{1}{2}\log\left(1+(h_{11}^2+h_{R1}^2)P\right)\\
R_2 &\leq \frac{1}{2}\log\left(1+(h_{22}^2+h_{R2}^2)P\right)\\
R_1+R_2 &\leq \frac{1}{2}\log\left(1+\left(h_{11}^2+h_{12}^2+h_{R1}^2+h_{R2}^2\right)P\right)\\
R_1+R_2 &\leq \frac{1}{2}\log\left(1+\left(h_{22}^2+h_{21}^2+h_{R1}^2+h_{R2}^2\right)P\right)
\end{align*}
where $a_0$ and $a_1$ are any real numbers.
\end{theorem}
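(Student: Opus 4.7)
The plan is to mirror the proof of Theorem 2: use instantaneous amplify-and-forward relaying to reduce the system to an equivalent scalar Gaussian interference channel, then invoke the classical strong-IC capacity characterization on the equivalent channel for achievability, and use Theorem 3 for the converse. The collinearity hypothesis $(h_{11},h_{12})=a_0(h_{21},h_{22})=a_1(h_{R1},h_{R2})$ is what makes the inner and outer sum-rate expressions collapse to the same closed form.

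For \textbf{achievability} I would reuse the relay coefficients $\alpha_{R1},\alpha_{R2}$ from Theorem 2, namely the unique solution of $\alpha_{R1}h_{iR}^{[1]}+\alpha_{R2}h_{iR}^{[2]}=h_{Ri}/h_{ii}$ for $i=1,2$. Condition (\ref{eq:pr2}) is exactly the resulting relay-power feasibility condition, already verified in Theorem 2. After AF the channel from $(X_1,X_2)$ to $(Y_1,Y_2)$ is the scalar Gaussian IC written explicitly in the proof of Theorem 2, and applying simultaneous non-unique decoding with independent $N(0,P)$ Gaussian codebooks yields the two individual-rate constraints of Theorem 2 together with two MAC-type sum-rate constraints, one at each receiver, of the form $R_1+R_2 \leq \frac{1}{2}\log\bigl(1+\bigl[(h_{11}^2+h_{R1}^2)+(h_{11}h_{12}+h_{R1}h_{R2})^2/(h_{11}^2+h_{R1}^2)\bigr]P\bigr)$ and its symmetric counterpart.

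For the \textbf{converse} I would apply Theorem 3 directly; its two individual-rate bounds already match. For the sum-rate bounds, the hypothesis $(h_{11},h_{12})=a_1(h_{R1},h_{R2})$ is precisely the Cauchy--Schwarz equality case, so $(h_{11}^2+h_{12}^2)(h_{R1}^2+h_{R2}^2)-(h_{11}h_{R1}+h_{12}h_{R2})^2=0$ and the $P^2$ term in the first outer-bound sum-rate vanishes. The analogous cancellation at receiver 2 follows from $(h_{21},h_{22})=(a_1/a_0)(h_{R1},h_{R2})$, so the Theorem 3 sum-rate bounds collapse exactly to the two $\frac{1}{2}\log(1+(\cdot)P)$ expressions claimed here.

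The last piece is to match achievability with the outer bound. Substituting $h_{11}=a_1h_{R1}$ and $h_{12}=a_1h_{R2}$ into the achievable sum-rate gives $(h_{11}^2+h_{R1}^2)+(h_{11}h_{12}+h_{R1}h_{R2})^2/(h_{11}^2+h_{R1}^2)=(a_1^2+1)(h_{R1}^2+h_{R2}^2)=h_{11}^2+h_{12}^2+h_{R1}^2+h_{R2}^2$, reproducing the first outer-bound sum-rate; the same algebra at receiver 2 with $(h_{21},h_{22})=(a_1/a_0)(h_{R1},h_{R2})$ handles the other. I expect the main obstacle to be only this algebraic collapse, together with confirming that $(a_0,a_1)$ can be chosen consistently with the strong-interference conditions $|h_{11}|\leq|h_{21}|$ and $|h_{22}|\leq|h_{12}|$ of Lemma 1 so that Theorem 3 applies in the first place.
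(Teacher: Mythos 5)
Your proposal is correct and follows essentially the same route as the paper: instantaneous AF with the Theorem~2 coefficients to reduce to a scalar Gaussian IC, simultaneous non-unique decoding for achievability, Theorem~3 for the converse, and the collinearity hypothesis invoked on both sides to make the $P^2$ term vanish (Cauchy--Schwarz equality) and to collapse the achievable MAC sum-rate term to $(h_{11}^2+h_{12}^2+h_{R1}^2+h_{R2}^2)P$. Your closing worry about compatibility of $(a_0,a_1)$ with the strong-interference conditions is not a gap in the proof --- those are joint hypotheses on the channel (and together they in fact force $|a_0|=1$) --- but it is a fair observation about how restrictive the theorem's assumptions are.
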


\begin{proof}

\textbf{Achievability.} Achievability follows by using simultaneous non-unique decoding and instantaneous amplify-and-forward relaying scheme described in the very strong interference channel. From the simultaneous non-unique decoding, the achievable rate pairs $(R_1,R_2)$ can be expressed as follows:
\begin{align*}
&R_1 \leq \frac{1}{2}\log(1+(h_{11}^2+h_{R1}^2)P)\\
&R_2 \leq \frac{1}{2}\log(1+(h_{22}^2+h_{R2}^2)P)\\
&R_1+R_2 \leq \frac{1}{2}\log\left(1+\left(h_{11}^2+h_{R1}^2+\frac{(h_{11}h_{12}+h_{R1}h_{R2})^2}{h_{11}^2+h_{R1}^2}\right)P\right)\\
&R_1+R_2 \leq \frac{1}{2}\log\left(1+\left(h_{22}^2+h_{R2}^2+\frac{(h_{21}h_{22}+h_{R1}h_{R2})^2}{h_{22}^2+h_{R2}^2}\right)P\right)
\end{align*}
From the assumption, we get
\begin{align*}
(h_{11}h_{12}+h_{R1}h_{R2})^2 = (h_{11}^2+h_{R1}^2)(h_{12}^2+h_{R2}^2)\\
(h_{21}h_{22}+h_{R1}h_{R2})^2 = (h_{21}^2+h_{R1}^2)(h_{22}^2+h_{R2}^2)
\end{align*}
since $ (h_{11},h_{12})=a_0(h_{21},h_{22})=a_1(h_{R1},h_{R2}) $ implies $ (h_{11},h_{R1})=c(h_{12},h_{R2}), (h_{21},h_{R11})=c(h_{22},h_{R2})$, where $c$ is a constant.
Thus, the achievable rate pairs $(R_1,R_2)$ can be expressed as follows:
\begin{align*}
&R_1 \leq \frac{1}{2}\log(1+(h_{11}^2+h_{R1}^2)P)\\
&R_2 \leq \frac{1}{2}\log(1+(h_{22}^2+h_{R2}^2)P)\\
&R_1+R_2 \leq \frac{1}{2}\log\left(1+\left(h_{11}^2+h_{12}^2+h_{R1}^2+h_{R2}^2\right)P\right)\\
&R_1+R_2 \leq \frac{1}{2}\log\left(1+\left(h_{21}^2+h_{22}^2+h_{R1}^2+h_{R2}^2\right)P\right)
\end{align*}

\textbf{Converse.}  From Theorem 3, we get the following outer bound on $(R_1,R_2)$
\begin{align*}
&R_1 \leq \frac{1}{2}\log(1+(h_{11}^2+h_{R1}^2)P)\\
&R_2 \leq \frac{1}{2}\log(1+(h_{22}^2+h_{R2}^2)P)\\
&R_1+R_2 \leq \frac{1}{2}\log\left(1+\left(h_{11}^2+h_{12}^2+h_{R1}^2+h_{R2}^2\right)P+\left((h_{11}^2+h_{12}^2)(h_{R1}^2+h_{R2}^2)-(h_{11}h_{R1}+h_{12}h_{R2})^2\right)P^2\right)\\
&R_1+R_2 \leq \frac{1}{2}\log\left(1+\left(h_{21}^2+h_{22}^2+h_{R1}^2+h_{R2}^2\right)P+\left((h_{21}^2+h_{22}^2)(h_{R1}^2+h_{R2}^2)-(h_{21}h_{R1}+h_{22}h_{R2})^2\right)P^2\right)
\end{align*}
From the assumption, we get the following equation:
\begin{align*}
(h_{11}h_{R1}+h_{12}h_{R2})^2 &= (h_{11}^2+h_{12}^2)(h_{R1}^2+h_{R2}^2),\\ 
(h_{21}h_{R1}+h_{22}h_{R2})^2 &= (h_{21}^2+h_{22}^2)(h_{R1}^2+h_{R2}^2)
\end{align*}

Using this, we get:
\begin{align*}
&R_1 \leq \frac{1}{2}\log(1+(h_{11}^2+h_{R1}^2)P)\\
&R_2 \leq \frac{1}{2}\log(1+(h_{22}^2+h_{R2}^2)P)\\
&R_1+R_2 \leq \frac{1}{2}\log\left(1+\left(h_{11}^2+h_{12}^2+h_{R1}^2+h_{R2}^2\right)\right)\\
&R_1+R_2 \leq \frac{1}{2}\log\left(1+\left(h_{22}^2+h_{21}^2+h_{R1}^2+h_{R2}^2\right)\right)
\end{align*}
\end{proof}

In the following, we will give an example of a Gaussian interference relay-without-delay channel satisfying the strong interference condition.

\begin{example}
Consider a Gaussian interference relay-without-delay channel with channel coefficients $h_{11}=h_{21}=h_{R1}=1$, $h_{12}=h_{22}=h_{R2}=2$, $h_{1R}^{[1]}=1,h_{1R}^{[2]}=3,h_{2R}^{[1]}=3,h_{2R}^{[2]}=3,$ and $P = 1$. For this Gaussian interference relay-without-delay channel, the strong interference Gaussian relay-without-delay channel condition is satisfied since
\begin{align*}
1 = |h_{11}| &\leq |h_{21}| = 1\\
2 = |h_{22}| &\leq |h_{12}| = 2.
\end{align*}
Furthermore, we have
\begin{align*}
(h_{11},h_{12})=(h_{21},h_{22})=(h_{R1},h_{R2})&=(1,2).
\end{align*}
For this channel, (\ref{eq:pr2}) can be expressed as follows:
\begin{align*}
&\frac{(h_{2R}^{[1]}h_{22}h_{R1}-h_{1R}^{[1]}h_{11}h_{R2})^2+(h_{2R}^{[2]}h_{22}h_{R1}-h_{1R}^{[2]}h_{11}h_{R2})^2}{(h_{1R}^{[2]}h_{2R}^{[1]}-h_{1R}^{[1]}h_{2R}^{[2]})^2 h_{11}^2 h_{22}^2 }\left((h_{R1}^2+h_{R2}^2)P+1\right)=\frac{2}{3} \leq P_R
\end{align*}
For $\frac{2}{3} \leq P_R$, the capacity is given as follows:
\begin{align*}
&R_1 \leq \frac{1}{2}\log3, \\
&R_2 \leq \frac{1}{2}\log9, \\
&R_1+R_2 \leq \frac{1}{2}\log11 \\
\end{align*}
\end{example}

\section{Conclusion}

In this paper, we studied the interference relay-without-delay channel.
The main difference between the interference relay-without-delay channel and the conventional interference relay channel is in that the relay's current transmit symbol depends on its current received symbol as well as on the past received sequences.
For this channel, we defined very strong and strong Gaussian interference relay-without-delay channel motivated by the strong and very strong Gaussian interference channels.
We presented new outer bounds using genie-aided outer bounding such that both receivers know the received sequences of the relay.
Using the characteristics of the strong interference relay-without-delay channel, we suggested a tighter outer bound for the strong Gaussian interference relay-without-delay channel.
We also proposed an achievable scheme based on instantaneous amplify-and-forward relaying.
Surprisingly, the proposed instantaneous amplify-and-forward relaying scheme actually achieves the capacity exactly under certain conditions for both very strong and strong Gaussian interference relay-without-delay channels despite its simplicity.
 The proposed scheme would be practically useful since it can be optimal and at the same time it is very simple.

\end{document}